\documentclass{llncs}

\usepackage{amsmath,amssymb,latexsym}
\usepackage{booktabs}
\usepackage{xspace}
\usepackage{centernot,cancel}
\usepackage{paralist}
\usepackage{url}
\usepackage[usenames]{color}
\usepackage{caption}
\usepackage{subcaption}
\usepackage[caption=false]{subfig}
\usepackage{tikz}
\usetikzlibrary{positioning,arrows,automata,shapes,shadows,fit}

\newcommand{\set}[1]{\left\{#1\right\}}
\newcommand{\card}[1]{\left|#1\right|}

\newcommand{\uset}[1]{\ensuremath{\mathord{\uparrow}#1}}

\newcommand{\dset}[1]{\ensuremath{\mathord{\downarrow}#1}}
\newcommand{\defeq}{\mathrel{\stackrel{\rm def}{=}}}

\newcommand{\etal}{\emph {et al.}\xspace}

\newcommand{\nlessdot}{\mathrel{\centernot\lessdot}}

\newcommand{\keysp}{\mathcal{K}}

\newcommand{\eofg}{E_0^*}
\newcommand{\eofh}{E_0}

\newcommand{\vin}[1][x]{#1_{\rm in}}
\newcommand{\vout}[1][x]{#1_{\rm out}}

\newcommand{\treepi}{\widetilde{\Pi}}

\newcommand{\setup}{\mathsf{SetUp}}
\newcommand{\derive}{\mathsf{Derive}}

\pagestyle{headings}

\begin{document}
 
\title{Optimal Constructions for\\ Chain-based Cryptographic Enforcement\\ of Information Flow Policies}
\author{Jason Crampton \and Naomi Farley \and Gregory Gutin \and \mbox{Mark Jones}}
\institute{Royal Holloway, University of London}

\maketitle

\begin{abstract}
The simple security property in an information flow policy can be enforced by encrypting data objects and distributing an appropriate secret to each user.
A user derives a suitable decryption key from the secret and publicly available information.
A chain-based enforcement scheme provides an alternative method of cryptographic enforcement that does not require any public information, the trade-off being that a user may require more than one secret.
For a given information flow policy, there will be many different possible chain-based enforcement schemes.
In this paper, we provide a polynomial-time algorithm for selecting a chain-based scheme which uses the minimum possible number of secrets.
We also compute the number of secrets that will be required and establish an upper bound on the number of secrets required by any user.
\end{abstract}

\section{Introduction}\label{sec:intro}

Access control is a fundamental security service in modern computing systems and seeks to restrict the interactions between users of the system and the resources provided by the system.
Generally speaking, access control is policy-based, in the sense that a policy is defined by the resource owner(s) specifying those interactions that are authorized.
An attempt by a user to interact with a protected resource, typically called an \emph{access request}, is evaluated by a trusted software component, the \emph{policy decision point} (or \emph{authorization decision function}), to determine whether the request should be permitted (if authorized) or denied (otherwise).
The use of a policy decision point is entirely appropriate when we can assume the policy will be enforced by the same organization that defined it.
However, use of third-party storage, privacy policies controlling access to personal data, and digital rights management all give rise to scenarios where this assumption does not hold.

An alternative approach to policy enforcement, and one that has attracted considerable interest in recent years, is to encrypt the protected object and enable authorized users to derive decryption keys.
This approach is particularly suitable for data that changes infrequently, for read-only policies, and for policies that can be represented in terms of user attributes.
Research into cryptographic access control began with the seminal work of Akl and Taylor~\cite{AkTa83} on the enforcement of information flow policies, and has seen a resurgence of interest in recent years.

Generally, it is undesirable to provide a user with all the keys she requires to decrypt protected objects.
Instead, a user is given a small number of secrets from which she is able to derive all keys required.
Thus a cryptographic enforcement scheme may be characterized by%
\begin{inparaenum}[(i)]
 \item the number of secrets each user has to store,
 \item the total number of secrets,
 \item the amount of auxiliary (public) information required for key derivation, and
 \item the amount of time required for key derivation.
\end{inparaenum}

Many schemes in the literature provide each user with a single secret~\cite{atal:dyna09,cram:key06}, the trade-off being that the amount of public information and derivation time may be substantial.
In contrast, chain-based schemes require no public information but each user may require more than one secret~\cite{CrDaMa10,frei:prov11,frei:simp13}.
In addition, chain-based schemes can achieve very strong security properties~\cite{frei:simp13}.
There are many different ways to instantiate a chain-based scheme for a given policy, each instantiation being defined by a chain partition of the partially ordered set that defines the policy.

However, existing work on chain-based CESs assumes the existence of a chain partition and simply generates the required secrets and keys for this partition~\cite{CrDaMa10,frei:prov11,frei:simp13}.
This approach ignores the fact that there will be (exponentially) many choices of chain partition.
Thus, it is important, if we are to make best use of chain-based CESs, that we know which chain partition to use for a given information flow policy.
It is this issue that we address in this paper.

\paragraph{Contributions.}
Our first contribution (Theorem~\ref{thm:keys-equal-sum-of-omega}) is to show how $\widehat{K}(\Pi)$, the (total) number of secrets for a chain partition $\Pi$, is related to the set of edges in the representation of $\Pi$ as an acyclic directed graph.
We then prove that $\widehat{K}(\Pi)$ is determined by the end-points of the chains in $\Pi$ (Lemma~\ref{lem:number-of-keys-from-bottom-elements}).
This, in turn, allows us to prove there exists a chain partition that simultaneously minimizes the number of secrets required and the number of chains in the partition (Theorem~\ref{thm:chain-partition-only-requires-w-chains}).
The last result is somewhat unexpected, as it is not usually possible to simultaneously minimize two different parameters.
The result is also of practical importance, since the number of chains in $\Pi$ provides a tight upper bound on the number of secrets required by any one user.
Our main contribution (Theorem~\ref{thm:main-theorem} and Section~\ref{sec:chain-partition-requiring-widehat-k-keys}) is to develop a polynomial-time algorithm that enables us to find a chain partition $\Pi$ such that $\widehat{K}(\Pi)$ and the number of chains is minimized (with respect to all chain partitions).
Our algorithm is based on finding an optimal feasible flow in a network and makes use of the characterization of the number of secrets in terms of the set of edges (established in Theorem~\ref{thm:keys-equal-sum-of-omega}) to define the capacities of the edges in the network.
We thereby provide rigorous foundations for the development of efficient chain-based enforcement schemes.
%

\paragraph{Paper structure.}
In the next section, we provide the relevant background on cryptographic enforcement schemes, formally define the problem, and state Theorem~\ref{thm:main-theorem}.
In Sec.~\ref{sec:computing-k-max-and-widehat-k}, we state and prove Theorems~\ref{thm:keys-equal-sum-of-omega} and~\ref{thm:chain-partition-only-requires-w-chains} and Lemma~\ref{lem:number-of-keys-from-bottom-elements}. 
In Sec.~\ref{sec:chain-partition-requiring-widehat-k-keys}, we develop an efficient algorithm to derive the best chain partition and prove Theorem~\ref{thm:main-theorem}.
We conclude the paper with a summary of our contributions and some ideas for future work.

\section{Background and Problem Statement}\label{sec:background}

A \emph{partially ordered set} (or \emph{poset}) is a pair $(X,\leqslant)$, where $\leqslant$ is a reflexive, anti-symmetric, transitive binary relation on $X$.
We may write $x \geqslant y$ whenever $y \leqslant x$, and $y < x$ whenever $y \leqslant x$ and $y \ne x$.
Given a poset $(X,\leqslant)$, it is convenient to introduce the following notation.
\[
 \dset{x} \defeq \set{y \in X : y \leqslant x} \qquad\text{and}\qquad
 \uset{x} \defeq \set{y \in X : y \geqslant x}
\]
We will also make use of the following terminology and notation.
\begin{compactitem}
  \item We say $x$ \emph{covers} $y$, denoted $y \lessdot x$, if $y < x$ and there does not exist $z \in X$ such that $y < z < x$.
        We say $y$ is a \emph{child} of $x$ if $y \lessdot x$ (and $x$ is a \emph{parent} of $y$).
  \item The \emph{Hasse diagram} of a poset is the directed acyclic graph $H = (X,\eofh)$, where $xy \in E_0$ if and only if $y \lessdot x$.
  \item $X$ is a \emph{tree} if no element of $X$ has more than one parent and $X$ has a unique maximum element.
  \item $Y \subseteq X$ is a \emph{chain} (or \emph{total order}) if for $x,y \in Y$, $x < y$ or $x = y$ or $y < x$.
	$\set{C_1,\dots,C_\ell}$ is a \emph{chain partition} (of $(X,\leqslant)$) if $C_i \subseteq X$ is a chain, $C_i \cap C_j = \emptyset$ if $i \ne j$, and $C_1 \cup \dots \cup C_\ell = X$.
  \item $Y \subseteq X$ is an \emph{antichain} if for $x,y \in Y$, $x \leqslant y$ if and only if $x = y$.
        (In other words, for $x \ne y$ in an antichain, $x \not\leqslant y$ and $y \not\leqslant x$.)
	The \emph{width} of a poset is the cardinality of an antichain of maximum size. 
\end{compactitem}

An illustrative Hasse diagram is shown in Fig.~\ref{fig:hasse-diagram}.
In the poset depicted, $\set{a,d,f}$ is a chain, for example, and $\set{d,e}$ is an antichain of maximum size.
Thus the width of this poset is $2$ and one chain partition of cardinality $2$ is $\set{\set{a,c,e,g,h},\set{b,d,f}}$.

\begin{figure}[!t]\centering
    \begin{tikzpicture}[v/.style={circle,draw,fill=white,inner sep=0pt,minimum width=4pt},>=stealth,<-,scale=.9, transform shape]
      \node[v,label=left:$a$] (a) {};
      \node[v,above left=of a,label=left:$b$] (b) {};
      \node[v,above right=of a,label=left:$c$] (c) {};
      \node[v,above right=of b,label=left:$d$] (d) {};
      \node[v,above right=of c,label=left:$e$] (e) {};
      \node[v,above left=of d,label=left:$f$] (f) {};
      \node[v,above right=of d,label=left:$g$] (g) {};
      \node[v,above left=of g,label=left:$h$] (h) {};
      \draw (a) -- (b);
      \draw (a) -- (c);
      \draw (b) -- (d);
      \draw (c) -- (d);
      \draw (c) -- (e);
      \draw (d) -- (f);
      \draw (d) -- (g);
      \draw (e) -- (g);
      \draw (f) -- (h);
      \draw (g) -- (h);
    \end{tikzpicture} 
   \caption{The Hasse diagram of a simple poset}\label{fig:hasse-diagram}
\end{figure}
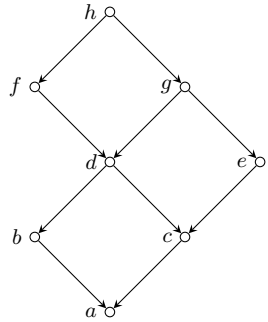

\begin{definition}
An \emph{information flow policy} is a tuple \mbox{$(X,\leqslant,U,O,\lambda)$}, where:
  \begin{compactitem}
    \item $(X,\leqslant)$ is a (finite) partially ordered set of \emph{security labels};
    \item $U$ is a set of \emph{users} and $O$ is a set of \emph{objects};
    \item $\lambda: U \cup O \rightarrow X$ is a \emph{security function} that associates users and objects with security labels.
  \end{compactitem}
The \emph{simple security property} requires that user $u \in U$ can read an object $o \in O$ if and only if $\lambda(u) \geqslant \lambda(o)$.
\end{definition}

We may define an equivalence relation $\sim$ on $U$, where $u \sim v$ if and only if $\lambda(u) = \lambda(v)$.
We write $U_x$ to denote $\set{u \in U : \lambda(u) = x}$; $U$ is partitioned into the set of equivalence classes $\set{U_x : x \in X}$.
Similarly, $O_x \subseteq O$ is the set of objects having security label $x \in X$.
Thus, the simple security property guarantees that any $o \in O_x$ can be read by a user $u \in U_y$ for any $y \geqslant x$.
Conversely, $u \in U_y$ can read $o \in O_x$ for any $x \leqslant y$.
Henceforth, we will represent an information flow policy $(X,\leqslant,U,O,\lambda)$ as a pair $(X,\leqslant)$ with the tacit understanding that $U$, $O$ and $\lambda$ are given.

\subsection{Cryptographic Enforcement of Information Flow Policies}

One way of enforcing the simple security property (for policy $(X,\leqslant)$) is to encrypt $o \in O_y$ with a (symmetric) key $k(y)$ and provide all users in $U_x$, where $x \geqslant y$ with the key $k(y)$.
An alternative is to provide a user $u$ in $U_x$ with a smaller number of keys (typically a single key for label $x$) and enable $u$ to derive keys for all $y$ such that $y < x$.
However, this introduces the possibility that users may be able to collude and use their keys to derive a key that no single user could derive.

More formally, there exists the notion of a \emph{cryptographic enforcement scheme} (CES), defined by the $\setup$ and $\derive$ algorithms, $\setup$ being used to generate secrets and keys and the data used to derive secrets and keys, and $\derive$ being used to compute secrets and keys.
Let $\keysp$ denote an arbitrary key space (typically $\keysp=\set{0,1}^l$ for some $l\in\mathbb{N}$). 
Then $\setup$ and $\derive$ have the following characteristics.
\begin{itemize}
 \item $\setup$ takes as input a security parameter~$\rho$ and information flow policy $(X,\leqslant)$.
       
       It outputs, for each element $x \in X$, a pair $(\sigma(x), \kappa(x))$: the \emph{secret} $\sigma(x)$ is given to all users in $U_x$; $\sigma(x)$ is used to derive secrets and/or keys for labels $y \leqslant x$; and the \emph{key} $\kappa(x) \in \keysp$ is used to encrypt data objects in $O_x$.
       
       The $\setup$ algorithm also outputs a set of public information $\sf Pub$, which is used for the derivation of secrets and keys.%
 \item $\derive$ takes as input $(X,\leqslant)$, $\sf Pub$, start and end points $x,y \in X$ and $\sigma(x)$. 
       
       It outputs $\kappa(y)\in\keysp$ if and only if $y \leqslant x$.
       (In particular, $\kappa(x)$ can be derived from $\sigma(x)$.)
\end{itemize}

The requirement that $\derive$ outputs $\kappa(y)$ (given $\sigma(x)$) if $y \leqslant x$ is a correctness criterion, which ensures an authorized user can derive the keys required to decrypt objects.
We also require a security criterion.
Informally, the \emph{strong key-indistinguishability} criterion requires the following.
\begin{quote}
There is no polynomial time algorithm, given $z \in X$, a set of secrets $\sigma(Y) = \set{\sigma(y) : y \in Y}$ such that $z \not\leqslant y$ for any $y \in Y$, and $\kappa(x)$ for all $x \ne z$ (and the public information $Pub$), that can distinguish between $\kappa(z)$ and a random key in $\keysp$. 
\end{quote}
That is, an adversary cannot distinguish a key from random unless it may be computed from one of the secrets or keys known to the adversary (which implies, in particular, that the adversary can only compute such a key if it can be computed from one of those secrets); see Freire \etal~\cite{frei:simp13} for further details.

\subsection{Chain-based Enforcement}

For certain classes of cryptographic enforcement schemes, public information is not required.
In particular, if $X$ is a chain, then (by definition) there is a unique directed path from $x$ to $y$ (in the Hasse diagram of $X$) whenever $y < x$.
Then for $y \lessdot x$, we may define the \emph{secret} $\sigma(y)$ to be $F(\sigma(x))$, and $\kappa(y) = H(\sigma(y))$, where $F$ and $H$ are suitable one-way functions.
Thus, if $y < x$, there exist $z_1,\dots,z_\ell \in X$ with $y = z_1 \lessdot z_2 \lessdot \dots \lessdot z_\ell = x$; $\kappa(y)$ may be derived from $\sigma(x)$ by iteratively deriving $\sigma(z_i) = F(\sigma(z_{i+1}))$, $i = \ell-1,\dots,1$, and then deriving $\kappa(y) = H(\sigma(y)) = H(\sigma(z_1))$.

This observation has led to the development of chain-based CESs~\cite{CrDaMa10,frei:prov11,frei:simp13} for arbitrary information flow policies.
The basic idea is to partition the information flow policy $(X,\leqslant)$ into chains and then construct multiple CESs, one for each chain.

More formally, let $(X,\leqslant)$ be a poset and $C = x_1 > x_2 > \dots > x_m$ be a chain in $X$.
Then we say any chain of the form $x_j > x_{j+1} > \dots > x_m$, $1 \leqslant j \leqslant m$, is a \emph{suffix} of $C$; the empty chain is (vacuously) also a suffix of $C$.

\begin{proposition}
For all $x \in X$ and any chain $C \subseteq X$, $\dset{x} \cap C$ is a suffix of $C$.
\end{proposition}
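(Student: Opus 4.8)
The plan is to show that $\dset{x} \cap C$ is downward-closed within the chain $C$; for a chain, a downward-closed subset is exactly a suffix. First I would set $C = x_1 > x_2 > \dots > x_m$ and let $D = \dset{x} \cap C$. Writing $D = \set{x_i : i \in I}$ for some index set $I \subseteq \set{1,\dots,m}$, the goal becomes to show that $I$ is an ``up-set'' of indices: if $i \in I$ and $j > i$, then $j \in I$. Granting this, $I = \set{j, j+1, \dots, m\}$ where $j = \min I$ (or $I = \emptyset$), so $D = x_j > x_{j+1} > \dots > x_m$, which is a suffix of $C$ by definition; the empty case is covered since the empty chain is a suffix.

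The key step is the index-monotonicity claim. Suppose $x_i \in D$, so $x_i \leqslant x$, and let $j > i$. Since $C$ is a chain listed in strictly decreasing order, $x_j < x_i$, hence $x_j \leqslant x_i \leqslant x$ by transitivity of $\leqslant$, so $x_j \in \dset{x}$; and $x_j \in C$ trivially, so $x_j \in D$, i.e.\ $j \in I$. That is the whole argument. One should also note the degenerate possibilities explicitly: if $D = \emptyset$ it is the empty suffix, and if $x \in C$ then $D$ is a nonempty suffix beginning at $x$ itself.

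There is essentially no obstacle here; the only thing to be careful about is the bookkeeping between elements of $C$ and their indices, and making sure the ``empty chain is a suffix'' convention is invoked so that the statement holds even when $\dset{x} \cap C$ is empty (which happens precisely when $x$ is incomparable to, or strictly below, every element of $C$). I would phrase the final write-up in one short paragraph: pick the least index $j$ with $x_j \leqslant x$ (if any), and use transitivity to conclude every $x_k$ with $k \geqslant j$ lies in $\dset{x} \cap C$ while every $x_k$ with $k < j$ does not, so $\dset{x} \cap C = \set{x_j, \dots, x_m}$, a suffix of $C$.
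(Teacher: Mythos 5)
Your proof is correct. Note that the paper itself gives no proof of this proposition, citing it from Crampton et al.; the argument you give -- transitivity of $\leqslant$ shows $\dset{x} \cap C$ is downward-closed in $C$, and a downward-closed subset of a chain listed in decreasing order is precisely a suffix (with the empty-suffix convention covering the case $\dset{x} \cap C = \emptyset$) -- is the standard one and is exactly what is needed.
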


The above result (due to Crampton~\etal~\cite[Proposition 4]{CrDaMa10}) enables us to define, for a given chain partition $\Pi$, the secrets that should be given to a user $u \in U_x$, since $\dset{x}$ defines the labels for which $u$ is authorized.
Given a chain partition $\Pi = \set{C_1,\dots,C_\ell}$, $\set{\dset{x} \cap C_1,\dots,\dset{x} \cap C_\ell}$ is a disjoint collection of chain suffixes.
Hence, a user in $U_x$ must be given the secrets for the maximal elements in the non-empty suffixes $\dset{x} \cap C_1,\dots,\dset{x} \cap C_\ell$.
Thus, any user requires at most $\ell$ secrets.
Let $\phi(x,\Pi) \subseteq X$ denote this set of maximal elements.
(Clearly, $x \in \phi(x,\Pi)$ for all chain partitions $\Pi$ and all $x \in X$.)

\begin{remark}
Let $w$ be the width of a poset $(X,\leqslant)$. 
Clearly, $(X,\leqslant)$ cannot have a chain partition with less than $w$ chains. 
Dilworth's theorem asserts that there exists a chain partition of $(X,\leqslant)$ into $w$ chains~\cite{dilw:deco50}.
Thus, if we can find a chain partition of $X$ into $w$ chains, no user will require more than $w$ secrets.
(If $u$ were to have more secrets than there are chains in the partition, then there must exist a chain containing $y$ and $z$ for which $u$ has secrets and one of the secrets may be derived from the other.)
\end{remark}

Freire \etal~\cite{frei:simp13} provide a formal description of the $\setup$ and $\derive$ algorithms.
Informally, the $\setup$ algorithm performs the following steps:
\begin{enumerate}
 \item for each chain $C_i$  in $\Pi$, select a secret for the top element in $C_i$ and generate a secret for each element in the chain by applying the one-way function $F$ to the secret of its parent in $C_i$;
 \item for each element $x \in X$, generate $\kappa(x)$ by applying the one-way function $H$ to $\sigma(x)$;
 \item assign the secrets $\sigma(\phi(x,\Pi)) \defeq \set{\sigma(z) : z \in \phi(x,\Pi)}$ to each user in $U_x$.
\end{enumerate}
The $\derive$ algorithm performs the following steps, given $x,y \in X$ and $\sigma(\phi(x,\Pi))$:
\begin{enumerate}
 \item if $x = y$, then output $H(\sigma(x))$;
 \item if $y < x$, then find $z \in \phi(x,\Pi)$ such that $z \geqslant y$, so there exist $z = z_0 \gtrdot_{\Pi} \dots  \gtrdot_{\Pi} z_t = y$, and compute $F(\sigma(z_0)) = \sigma(z_1),\dots,F(\sigma(z_{t-1})) = \sigma(y)$; output $H(\sigma(y))$.
\end{enumerate}
This scheme has the strong key-indistinguishability property; see Freire \etal~\cite{frei:simp13} for further details.

A user in $U_x$ will need to be given $\card{\phi(x,\Pi)}$ secrets, in contrast to most CESs in the literature in which each user receives a single secret~\cite{atal:dyna09,cram:key06}.
However, chain-based CESs have substantial benefits:%
\begin{inparaenum}[(i)]
 \item they require no public information~\cite{CrDaMa10};\
 \item they can use cryptographic primitives that are very easy to compute; and 
 \item it is easy to construct schemes with the strong key-indistinguishability property~\cite{frei:simp13}.
\end{inparaenum}

\subsection{Problem Statement}\label{sec:problem-statement}

Certain aspects of chain-based CESs are not well understood.
As we have already noted, some users will require multiple secrets, each of which corresponds to a unique label in $X$.
In particular, a user $u$ in $U_x$ will require a secret for each chain that contains an element $y$ such that $y < x$.
Three chain partitions of the poset in Fig.~\ref{fig:hasse-diagram} are shown in Fig.~\ref{fig:chaincostcomparison}.
We have, for example, $\phi(g,\Pi_1) = \set{b,e,g}$, $\phi(g,\Pi_2) = \set{b,d,g}$, and $\phi(g,\Pi_3) = \set{d,g}$.
Hence, the number of secrets required, on a per-user basis and in total, will vary, depending on the chain partition chosen.
Thus, considering various chain partitions of $X$, we may ask:
\begin{itemize}
 \item How do we minimize $k_{\max}$, the maximum number of secrets a user may require?
 \item How do we minimize $K$, the total number of secrets required?
 \item How do we minimize $\widehat{K}$, the total number of secrets that need to be issued to users?
\end{itemize}

 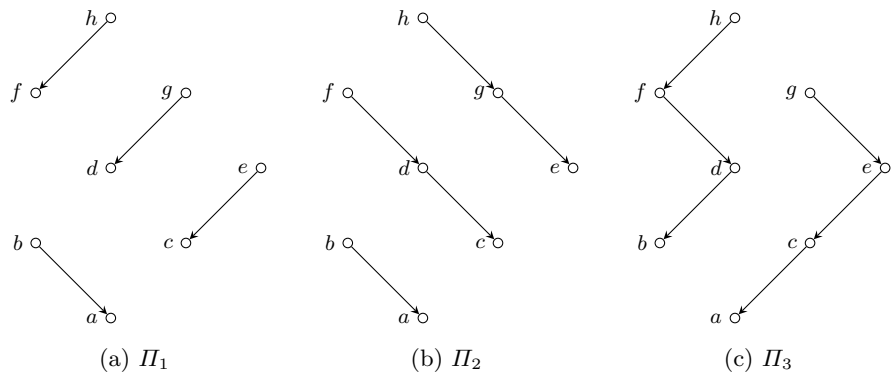
\begin{figure}[t]\centering
    \begin{subfigure}[t]{0.32\textwidth} \centering
    \begin{tikzpicture}[v/.style={circle,draw,fill=white,inner sep=0pt,minimum width=4pt},>=stealth,<-,scale=.9,transform shape]
      \node[v,label=left:$a$] (a) {};
      \node[v,above left=of a,label=left:$b$] (b) {};
      \node[v,above right=of a,label=left:$c$] (c) {};
      \node[v,above right=of b,label=left:$d$] (d) {};
      \node[v,above right=of c,label=left:$e$] (e) {};
      \node[v,above left=of d,label=left:$f$] (f) {};
      \node[v,above right=of d,label=left:$g$] (g) {};
      \node[v,above left=of g,label=left:$h$] (h) {};
      \draw (a) -- (b);
      \draw (c) -- (e);
      \draw (d) -- (g);
      \draw (f) -- (h);
    \end{tikzpicture} 
   \caption{$\Pi_1$}\label{subfig:chainpartition20keys}
   \end{subfigure}
   \hfill
    \begin{subfigure}[t]{0.32\textwidth} \centering
    \begin{tikzpicture}[v/.style={circle,draw,fill=white,inner sep=0pt,minimum width=4pt},>=stealth,<-,scale=.9,transform shape]
      \node[v,label=left:$a$] (a) {};
      \node[v,above left=of a,label=left:$b$] (b) {};
      \node[v,above right=of a,label=left:$c$] (c) {};
      \node[v,above right=of b,label=left:$d$] (d) {};
      \node[v,above right=of c,label=left:$e$] (e) {};
      \node[v,above left=of d,label=left:$f$] (f) {};
      \node[v,above right=of d,label=left:$g$] (g) {};
      \node[v,above left=of g,label=left:$h$] (h) {};
      \draw (a) -- (b);
      \draw (c) -- (d);
      \draw (d) -- (f);
      \draw (e) -- (g);
      \draw (g) -- (h);
    \end{tikzpicture} 
   \caption{$\Pi_2$}\label{subfig:chainpartition17keys}
   \end{subfigure}
    \hfill
    \begin{subfigure}[t]{0.32\textwidth} \centering
    \begin{tikzpicture}[v/.style={circle,draw,fill=white,inner sep=0pt,minimum width=4pt},>=stealth,<-,scale=.9,transform shape]
      \node[v,label=left:$a$] (a) {};
      \node[v,above left=of a,label=left:$b$] (b) {};
      \node[v,above right=of a,label=left:$c$] (c) {};
      \node[v,above right=of b,label=left:$d$] (d) {};
      \node[v,above right=of c,label=left:$e$] (e) {};
      \node[v,above left=of d,label=left:$f$] (f) {};
      \node[v,above right=of d,label=left:$g$] (g) {};
      \node[v,above left=of g,label=left:$h$] (h) {};
      \draw (a) -- (c);
      \draw (b) -- (d);
      \draw (c) -- (e);
      \draw (d) -- (f);
      \draw (e) -- (g);
      \draw (f) -- (h);
    \end{tikzpicture} 
   \caption{$\Pi_3$}\label{subfig:chainpartition13keys}
   \end{subfigure}
   \caption{Three chain partitions of the poset in Fig.~\ref{fig:hasse-diagram}} \label{fig:chaincostcomparison} 
\end{figure}

More formally, given a chain partition $\Pi$ of $(X,\leqslant)$, we may regard $\phi$ as a function from $X$ to $2^X$ that is completely determined by $\Pi$.
Thus, given a chain partition $\Pi$, we can define the following values.
\begin{align*}
 k_{\max}(\Pi) &\defeq \max\set{\card{\phi(x,\Pi)} : x \in X} \\  
 K(\Pi) &\defeq \sum_{x \in X} \card{\phi(x,\Pi)} \\
 \widehat{K}(\Pi) &\defeq \sum_{x \in X} \card{U_x} \cdot \card{\phi(x,\Pi)}
\end{align*}
Values of $k_{\max}$ and $K$ for the chain partitions in Fig.~\ref{fig:chaincostcomparison} are shown in Table~\ref{tbl:comparison-of-k-K}; node $h$ is used for illustrative purposes.\footnote{Note that we can deduce  $K$ from $ \widehat{K}$ by letting $|U_x|=1$ for all $x \in X$.}

\begin{table}[h]\centering
    \[
     \begin{array}{c|rrrr}
      \text{Partition} & \quad\phi(h) & \quad k_{\max} & \quad K \\
     \hline
     \Pi_1 & \set{b,e,g,h} & 4 & 20 \\
     \Pi_2 & \set{b,f,h} & 3 & 17 \\
     \Pi_3 & \set{g,h} & 2 & 13  \\
     \end{array}
    \]
    \caption{$\phi(h)$, $k_{\max}$ and $K$ for the chain partitions in Fig.~\ref{fig:chaincostcomparison}}\label{tbl:comparison-of-k-K}
%
\end{table}
The important question is: Can we minimize these parameters (over all choices of chain partition $\Pi$ for $X$)?
In short, given an information flow policy $(X,\leqslant)$, how do we determine $\Pi$ for use in a chain-based CES?\footnote{Crampton~\etal~\cite{CrDaMa10} observed that further research was needed to identify the best choice of chain partition for a given information flow policy.  While subsequent research has formalized~\cite{frei:prov11} and strengthened the security properties of chain-based CESs~\cite{frei:simp13}, we are not aware of any research that specifies how to select a chain partition.}
It is this question we address in the remainder of the paper.
In particular, at the end of Section~\ref{sec:chain-partition-requiring-widehat-k-keys}, we prove the following result.

\begin{theorem}\label{thm:main-theorem}
 Let $(X,\leqslant)$ be an information flow policy of width $w$ and let $\widehat{K}$ denote the minimum number of secrets required by a chain-based enforcement scheme for $X$.
 Then in $O(|X|^4w)$  time, we can find a chain partition $\Pi$ for which the corresponding chain-based enforcement scheme only requires $\widehat{K}$ secrets and $k_{\max} \leqslant w$.
\end{theorem}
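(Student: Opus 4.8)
The plan is to reduce the search for an optimal chain partition to a network-flow problem, exploiting the structural results established earlier. The starting point is Theorem~\ref{thm:keys-equal-sum-of-omega}, which expresses $\widehat{K}(\Pi)$ in terms of the edge set of the acyclic digraph representing $\Pi$ (equivalently, in terms of which covering or order edges are ``used'' as chain edges), and Lemma~\ref{lem:number-of-keys-from-bottom-elements}, which tells us that $\widehat{K}(\Pi)$ depends only on the set of chain end-points (bottom elements of the chains). Together with Theorem~\ref{thm:chain-partition-only-requires-w-chains}, which guarantees that minimizing $\widehat{K}$ can be done while simultaneously using only $w$ chains, this means it suffices to (i) build a flow network in which integral feasible flows correspond to chain partitions, (ii) choose edge capacities and costs so that the cost of a flow equals $\widehat{K}(\Pi)$ (up to an additive constant independent of $\Pi$), and (iii) impose a side constraint, or argue separately, that an optimal flow can be taken to yield at most $w$ chains.

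First I would set up the network. A standard device for chain partitions of a poset is a bipartite-style construction: split each $x \in X$ into an ``out'' copy $\vout$ and an ``in'' copy $\vin$, add an edge $\vout[x]\vin[y]$ for each pair $y < x$ (or just for covering pairs $y \lessdot x$, after suitably accounting for the transitive reasoning that Theorem~\ref{thm:keys-equal-sum-of-omega} uses), and add a source $s$ with edges to every $\vout[x]$ and a sink $t$ with edges from every $\vin[x]$. A flow of value $|X| - \ell$ that saturates the appropriate degree constraints corresponds to a chain partition into $\ell$ chains: each unit of flow through an edge $\vout[x]\vin[y]$ declares $y$ to be the successor of $x$ on a chain. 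The key step is to encode $\widehat{K}(\Pi)$ as the cost of this flow. By Theorem~\ref{thm:keys-equal-sum-of-omega}, $\widehat{K}(\Pi)$ is a sum over edges of the $\Pi$-digraph of a weight (the ``$\omega$'' quantity) that can be computed locally from the poset, so I would assign each network edge $\vout[x]\vin[y]$ a cost equal to that weight, and leave the $s$- and $t$-edges cost-free. Then $\widehat{K}(\Pi)$ equals a fixed constant (the contribution one gets when no edges are used, i.e.\ the all-singletons partition) minus the total cost saved by the chosen edges — so minimizing $\widehat{K}$ becomes a minimum-cost flow computation. I would invoke Lemma~\ref{lem:number-of-keys-from-bottom-elements} here to justify that this edge-local cost is well defined, i.e.\ that it genuinely captures $\widehat{K}$ regardless of how the chains thread through non-endpoint elements.

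Next I would run a minimum-cost feasible-flow algorithm on this network; the successive-shortest-paths / cycle-cancelling family gives integral optima in polynomial time, and a careful accounting of the number of augmentations ($O(|X|^2)$ edges, $O(|X|)$ units of flow, $O(|X|w)$ or so work per shortest-path computation) yields the claimed $O(|X|^4 w)$ bound — this running-time bookkeeping is routine but is where the exponent comes from, so I would present it explicitly. Finally, to get $k_{\max} \leqslant w$, I would appeal to Theorem~\ref{thm:chain-partition-only-requires-w-chains}: among all chain partitions achieving the minimum $\widehat{K}$, one uses at most $w$ chains, and (by the Remark following Dilworth's theorem) a partition into $w = $ width chains forces $k_{\max} \leqslant w$. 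Operationally this means either adding a constraint that the flow has value exactly $|X| - w$ (forcing exactly $w$ chains), or post-processing an optimal flow by merging chains without changing the set of endpoints — and by Lemma~\ref{lem:number-of-keys-from-bottom-elements} merging chains while preserving endpoints does not change $\widehat{K}$, so this is safe.

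The main obstacle I anticipate is the second step: making the cost of a single network edge exactly reflect the corresponding term in the edge-sum formula of Theorem~\ref{thm:keys-equal-sum-of-omega}. The subtlety is that $\widehat{K}(\Pi)$ counts, for each $x$, the number of distinct chains meeting $\dset{x}$, which is a \emph{global} property of the partition, whereas a flow cost is a \emph{sum of independent edge costs}; the reconciliation relies precisely on the telescoping/endpoint characterization (Lemma~\ref{lem:number-of-keys-from-bottom-elements}) that lets the global count be rewritten as a sum of local contributions, and I would need to verify that the network's edge set is rich enough (covering edges versus all order edges) for the weights to be simultaneously realizable and for every chain partition to be representable as a feasible flow. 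Getting this correspondence exactly right — including the additive constant and the direction of optimization — is the crux; everything downstream is standard flow theory plus the already-proved structural lemmas.
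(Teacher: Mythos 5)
Your overall route is the same as the paper's: split each element into $\vin$ and $\vout$, encode chain partitions as integral flows, use the $\omega$-weights of Theorem~\ref{thm:keys-equal-sum-of-omega} as arc costs, invoke Theorem~\ref{thm:chain-partition-only-requires-w-chains} to restrict attention to $w$ chains (which also gives $k_{\max}\leqslant w$), and run a polynomial-time minimum-cost flow algorithm to obtain the stated running time. However, the step you yourself flag as the crux is genuinely unresolved, and with the weights you actually state it fails. Write $\mu(z)\defeq\sum_{x\in\uset{z}}\card{U_x}$, so that $\omega(yz)=\mu(z)-\mu(y)$ for $z<y$. In your network (source $s$ to every $\vout$, every $\vin$ to sink $t$, those arcs cost-free, and cost $\omega(xy)$ on each arc $\vout\vin[y]$ with $y<x$), the cost of the flow corresponding to a partition $\Pi$ is the sum of $\omega$ over the chain arcs only, which telescopes along each chain $C_i$ with bottom $b_i$ and top $t_i$ to $\sum_i\bigl(\mu(b_i)-\mu(t_i)\bigr)$. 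By Lemma~\ref{lem:number-of-keys-from-bottom-elements} this equals $\widehat{K}(\Pi)-\sum_i\mu(t_i)$, and the subtracted term depends on $\Pi$ (it rewards partitions whose chain tops lie low in the poset), so minimizing this cost is \emph{not} minimizing $\widehat{K}$. Your parallel description of $\widehat{K}$ as ``a fixed constant minus the savings of the chosen edges'' is in fact a different, tail-based weighting (using an arc $\vout\vin[y]$ removes $x$ from the set of chain bottoms, saving $\mu(x)$), not the $\omega$-weights; the two schemes are conflated and neither is pinned down, including the direction of optimization.

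The paper closes exactly this hole by working with the tree $\treepi$ of Remark~\ref{rem:constructing-tree-from-chain-partition} rather than the bare partition: there is no global source; instead $b(\vout[r])=w$ and $b(\bot)=-w$, so the $w$ flow units are injected at $\vout[r]$ and every chain top $t_j\ne r$ is necessarily reached through an arc $\vout[r]\vin[t_j]$ of cost $\omega(rt_j)=\mu(t_j)-\card{U_r}$. These arcs charge precisely the missing top terms, so the flow cost equals $\widehat{K}(\Pi)-w\card{U_r}$, a constant shift, which is the content of Lemma~\ref{lem:min-cost-flow-equals-min-keys} (and is why Theorem~\ref{thm:keys-equal-sum-of-omega} is stated for $\treepi$, including the $\ell\card{U_r}$ term). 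The lower bounds $l(\vin\vout)=1$ force every element onto a chain, and fixing the balance at $w$ --- legitimate by Theorem~\ref{thm:chain-partition-only-requires-w-chains} --- both removes the dependence on the number of chains and yields $k_{\max}\leqslant w$ directly. Your fallback of taking an unconstrained optimum and merging chains can be made to work (that is essentially the Gallai--Milgram argument inside Theorem~\ref{thm:chain-partition-only-requires-w-chains}), but only after the cost correspondence is redone with the tail-based savings weights; and the running-time accounting, which you leave schematic, can only be carried out once the network, bounds and costs are fixed (the paper does this with the Negative Cycle algorithm in Lemma~\ref{lem:min-cost-flow-in-poly-time}).
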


\begin{remark}
We assume throughout that our information flow policy has a maximum element.
We may assume this without loss of generality: given an information flow policy $(X,\leqslant)$ without a maximum element, we simply add a maximum element $r$ and define $r \gtrdot m$ for all maximal elements $m$ in $X$; no users are assigned to $r$. 
Observe that such a transformation does not affect the values of $k_{\max}$ and $\widehat{K}$.
\end{remark}

\section{Computing $k_{\max}$ and $\widehat{K}$}\label{sec:computing-k-max-and-widehat-k}

Informally, we take a poset $(X,\leqslant)$ and construct a second poset $(X,\leqslant')$, where $x <' y$ implies $x < y$ (but $x < y$ does not necessarily imply $x <' y$).
We will say $\leqslant'$ is \emph{contained} in $\leqslant$.
In particular, any chain partition $\Pi$ of $(X,\leqslant)$ defines a second poset $(X,\leqslant_\Pi)$, where $x <_\Pi y$ if and only if $x$ and $y$ belong to the same chain and $x < y$; thus $\leqslant_\Pi$ is contained in $\leqslant$ for any $\Pi$.
Note, however, that $x \lessdot_\Pi y$ does not necessarily imply $x \lessdot y$.%
\footnote{To see this, consider the poset of four elements, in which $a \lessdot b \lessdot d$ and $a \lessdot c \lessdot d$ with $b \not\leqslant c, c \not\leqslant b$.  Then $\set{\set{b},\set{c},\set{a,d}}$ is a chain partition and $a \lessdot_\Pi d$, but $a \nlessdot d$.}

Given a poset $(X,\leqslant)$ and $z < y$, we define
\[
 \gamma(yz) = \set{x \in X : x \geqslant z, x \not\geqslant y}.
\]
Thus $z \in \gamma(yz)$ and $y \not\in \gamma(yz)$.
For the maximum element $r \in X$ and any $y,z \in X$ such that \mbox{$z < y$}, $r \not\in \gamma(yz)$.
Informally, the intuition behind $\gamma$ is that its cardinality measures the ``damage'' that would be done by creating a chain partition $\Pi$ such that $z \lessdot_{\Pi} y$, because having $z \lessdot_\Pi y$ means that $z \not\leqslant_\Pi x$ for any $x \in \gamma(yz)$.
Thus, every user in $U_x$ will require an extra secret in order to derive $\kappa(z)$.
We will capture this intuition more precisely in Lemma~\ref{lem:z-in-phi-x-and-x-in-gamma-y-z}.

\begin{remark}\label{rem:constructing-tree-from-chain-partition}
For maximum element $r$ and any chain partition $\Pi = \set{C_1,\dots,C_\ell}$,  $\phi(r,\Pi) = \set{t_1,\dots,t_\ell}$, where $t_i$ is the maximum element in chain $C_i$.
Moreover, $r = t_i$ for some $i$.
Hence, we can construct a tree $\treepi = (X,\leqslant_{\treepi})$, where $y \lessdot_{\treepi} x$ if and only if one of the following conditions holds:%
\begin{inparaenum}[(i)]
 \item $y = t_j$, $j \ne i$, and $x = r$;
 \item $y \lessdot_{\Pi} x$.
\end{inparaenum}
\end{remark}

Figure~\ref{fig:out-tree-from-chain-partition} illustrates the construction of two such trees, using chain partitions from Fig.~\ref{fig:chaincostcomparison}; the arcs used to create the trees are shown as dashed lines.

\begin{figure}[h]\centering
    \begin{subfigure}[t]{0.45\textwidth} \centering
    \begin{tikzpicture}[v/.style={circle,draw,fill=white,inner sep=0pt,minimum width=4pt},>=stealth,<-]
      \node[v,label=left:$a$] (a) {};
      \node[v,above left=of a,label=left:$b$] (b) {};
      \node[v,above right=of a,label=left:$c$] (c) {};
      \node[v,above right=of b,label=left:$d$] (d) {};
      \node[v,above right=of c,label=left:$e$] (e) {};
      \node[v,above left=of d,label=left:$f$] (f) {};
      \node[v,above right=of d,label=left:$g$] (g) {};
      \node[v,above left=of g,label=left:$h$] (h) {};
      \draw (a) -- (b);
      \draw (c) -- (e);
      \draw (d) -- (g);
      \draw[thick,dashed] (b) -- (h);
      \draw[thick,dashed] (e) to[bend right] (h);
      \draw (f) -- (h);
      \draw[thick,dashed] (g) -- (h);
    \end{tikzpicture} 
   \caption{$\treepi_1$}\label{subfig:pitree-20keys}
   \end{subfigure}
   \hfill
   \begin{subfigure}[t]{0.45\textwidth} \centering
    \begin{tikzpicture}[v/.style={circle,draw,fill=white,inner sep=0pt,minimum width=4pt},>=stealth,<-]
      \node[v,label=left:$a$] (a) {};
      \node[v,above left=of a,label=left:$b$] (b) {};
      \node[v,above right=of a,label=left:$c$] (c) {};
      \node[v,above right=of b,label=left:$d$] (d) {};
      \node[v,above right=of c,label=left:$e$] (e) {};
      \node[v,above left=of d,label=left:$f$] (f) {};
      \node[v,above right=of d,label=left:$g$] (g) {};
      \node[v,above left=of g,label=left:$h$] (h) {};
      \draw (a) -- (c);
      \draw (b) -- (d);
      \draw (c) -- (e);
      \draw (d) -- (f);
      \draw (e) -- (g);
      \draw (f) -- (h);
      \draw[thick,dashed] (g) -- (h);
    \end{tikzpicture} 
   \caption{$\treepi_3$}\label{subfig:pitree-13keys}
  \end{subfigure}
  \caption{Creating trees from partitions $\Pi_1$ and $\Pi_3$ in Fig.~\ref{fig:chaincostcomparison}}\label{fig:out-tree-from-chain-partition}
\end{figure}
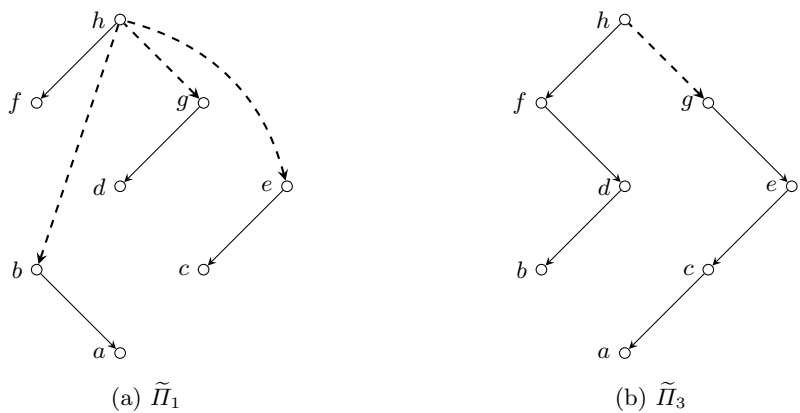

\begin{lemma}\label{lem:z-in-phi-x-and-x-in-gamma-y-z}
 Let $(X,\leqslant)$ be a poset and let $\Pi$ be a chain partition of $X$.
 Then, for all $x,y,z \in X$ such that $x \ne r$ and $z \lessdot_{\treepi} y$,
 \[
  z \in \phi(x,\Pi)\ \text{if and only if}\ x \in \gamma(yz).
 \]
\end{lemma}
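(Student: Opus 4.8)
The plan is to unwind the definitions of $\phi$ and $\gamma$, distinguishing the two ways in which the cover relation $z \lessdot_{\treepi} y$ can arise according to Remark~\ref{rem:constructing-tree-from-chain-partition}. Before doing so, I would record a convenient reformulation of membership in $\phi(x,\Pi)$. Let $C$ be the unique chain of $\Pi$ containing $z$; since $\leqslant$ and $\leqslant_\Pi$ agree on $C$, the Proposition that $\dset{x}\cap C$ is a suffix of $C$ applies, and $z\in\phi(x,\Pi)$ holds precisely when that suffix is non-empty and $z$ is its maximum element. The suffix is non-empty exactly when $z\leqslant x$, and (as $C$ is totally ordered) $z$ is its maximum exactly when no element covering $z$ from above inside $C$ lies in $\dset{x}$. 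Consequently: if $z$ is the top element of its chain then $z\in\phi(x,\Pi)\iff z\leqslant x$; otherwise, writing $z\lessdot_\Pi z'$ for the cover of $z$ inside its chain, $z\in\phi(x,\Pi)\iff z\leqslant x \text{ and } z'\not\leqslant x$.

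With this reformulation in hand, consider the two cases of Remark~\ref{rem:constructing-tree-from-chain-partition}. Case~(ii), $z\lessdot_\Pi y$: here $y$ is exactly the cover $z'$ of $z$ within its chain, so the reformulation gives $z\in\phi(x,\Pi)\iff x\geqslant z \text{ and } x\not\geqslant y$, which is the definition of $x\in\gamma(yz)$; note this case does not even use the hypothesis $x\ne r$. Case~(i), $y=r$ and $z=t_j$ is the top of a chain $C_j$ with $j\ne i$: now $z$ is the top element of its chain, so $z\in\phi(x,\Pi)\iff x\geqslant z$; and since $r$ is the maximum element, $x\not\geqslant r$ is the same as $x\ne r$, whence $\gamma(rz)=\uset{z}\setminus\set{r}$. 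Invoking the hypothesis $x\ne r$ we obtain $z\in\phi(x,\Pi)\iff x\geqslant z\iff x\in\uset{z}\setminus\set{r}=\gamma(yz)$, completing the argument.

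I do not expect a serious obstacle here: once the reformulation is in place the lemma falls out by reading off the definitions. The only steps needing genuine care are the reduction, in the reformulation, of ``$z$ is the maximum of the suffix $\dset{x}\cap C$'' to the single condition on the cover of $z$, and the observation that the hypothesis $x\ne r$ is precisely what makes Case~(i) work --- indeed the claimed equivalence genuinely fails for $x=r$, since $z=t_j\in\phi(r,\Pi)$ while $r\notin\gamma(rz)$.
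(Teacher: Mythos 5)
Your proof is correct and follows essentially the same route as the paper's: both reduce $z\in\phi(x,\Pi)$ to ``$z\leqslant x$ and $z$ is the maximum of the suffix $\dset{x}\cap C$'' and then match this against the definition of $\gamma(yz)$ using the structure of $\treepi$. The only difference is organizational: you split explicitly into the two kinds of $\treepi$-edges from Remark~\ref{rem:constructing-tree-from-chain-partition}, which makes the role of the hypothesis $x\ne r$ (needed only when $y=r$ and $z$ is a chain top) more visible than in the paper's terser uniform argument via uniqueness of the parent of $z$ in $\treepi$.
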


\begin{proof}
  Given $z \in \phi(x,\Pi)$ and chain partition $\Pi = \set{C_1,\dots,C_\ell}$, \mbox{$y_i \in \phi(x,\Pi) \cap C_i$} if and only if $C_i \cap \dset{x}$ is non-empty and $y_i$ is the maximum element in $C_i \cap \dset{x}$ (Sec.~\ref{sec:problem-statement}).
  Thus, $z \leqslant x$.
  Moreover, $x \not\geqslant y$ (otherwise there would exist $t \in \phi(x)$ such that $y \leqslant_{\treepi} t$ and hence $z \lessdot_{\treepi} y \leqslant_{\treepi} t$, violating the condition that $z$ is the maximum element in the suffix $C_i \cap \dset{x}$).
  That is, $x \in \gamma(yz)$.
  
  Now suppose $x \in \gamma(yz)$.
  Then $x \not\geqslant y$, by definition, and hence $y$ does not belong to $\dset{x} \cap C_i$ for any $i$.
  However, $x \geqslant z$; hence, there exists $t \in \phi(x)$ such that $z \leqslant_{\treepi} t$.
  Since $\Pi$ is a chain partition, the only parent of $z$ in $\treepi$ is $y$.
  Hence it must be the case that $z = t$ (and thus $z \in \phi(x)$).\qed
\end{proof}

Let $(X,\leqslant)$ be an information flow policy and let $y,z \in X$ with $z < y$.
Then, following Crampton~\etal~\cite{CrFaGuJoPo14}, we define 
\[
 \omega(yz) \defeq \sum_{x \in \gamma(yz)} \card{U_x}.
\]
We will be interested in minimizing the $\sum \omega(yz)$, where the sum is taken over all pairs $(y,z)$ such that $z \lessdot_{\treepi} y$.
The intuition behind this definition is that it captures, in some appropriate sense, the connectivity that is lost from $(X,\leqslant)$ by using $(X,\leqslant_{\Pi})$.
Since every element in $(X,\leqslant_\Pi)$ has at most one parent, $\gamma(yz)$ represents those elements in $X$ that become ``disconnected'' from $z$ by defining $z \lessdot_\Pi y$.
The next result establishes an exact correspondence between $\phi(x,\Pi)$ and $\gamma(yz)$, and enables us to use network flow techniques to compute a chain partition that minimizes $\widehat{K}$ (as we explain in Sec.~\ref{sec:chain-partition-requiring-widehat-k-keys}).

\begin{theorem}\label{thm:keys-equal-sum-of-omega}
Let $(X,\leqslant_\Pi)$ be a chain partition of $(X,\leqslant)$ with maximum element~$r$.
Then
\[
 \widehat{K}(\Pi) = \ell \card{U_r} + \sum_{z \lessdot_{\treepi} y} \omega(yz)
\]
where $\ell$ is the number of chains in $\Pi$.
\end{theorem}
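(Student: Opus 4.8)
The plan is to expand the definition of $\widehat{K}(\Pi) = \sum_{x \in X} \card{U_x}\cdot\card{\phi(x,\Pi)}$ by interchanging the order of summation so that we count, for each label $z$, the total weight $\sum_{x : z \in \phi(x,\Pi)} \card{U_x}$ of the users who must be issued the secret $\sigma(z)$. Concretely, writing $\mathbf{1}[\cdot]$ for an indicator,
\[
 \widehat{K}(\Pi) = \sum_{x \in X}\card{U_x}\sum_{z \in X}\mathbf{1}[z \in \phi(x,\Pi)] = \sum_{z \in X}\ \sum_{x : z \in \phi(x,\Pi)}\card{U_x}.
\]
So the whole theorem reduces to understanding, for a fixed $z$, the set $\set{x \in X : z \in \phi(x,\Pi)}$ and the weighted count of its users.

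The second step is a case split on $z$ according to its role in the tree $\treepi$ of Remark~\ref{rem:constructing-tree-from-chain-partition}. If $z \ne r$, then $z$ has a unique parent $y$ in $\treepi$ — that is, there is a unique $y$ with $z \lessdot_{\treepi} y$ — and Lemma~\ref{lem:z-in-phi-x-and-x-in-gamma-y-z} applies verbatim: for all $x \ne r$, $z \in \phi(x,\Pi) \iff x \in \gamma(yz)$. Since $r \notin \gamma(yz)$ always (noted in the text just before the definition of $\gamma$) and $z \in \phi(r,\Pi)$ would force $z = t_i$ for some chain top, I should be slightly careful: if $z$ is a chain top then $z \in \phi(r,\Pi)$; if not, $z \notin \phi(r,\Pi)$. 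In either case the contribution of the pair $(x,z)$ with $x = r$ is handled separately below, and for $x \ne r$ the inner sum is exactly $\sum_{x \in \gamma(yz)}\card{U_x} = \omega(yz)$. Summing over all $z \ne r$ gives precisely $\sum_{z \lessdot_{\treepi} y}\omega(yz)$, since each such $z$ contributes exactly once, paired with its unique $\treepi$-parent. It remains to account for the "top-level" contributions: the pairs $(x,z)$ with $x = r$, and the single pair $z = r$. We have $\phi(r,\Pi) = \set{t_1,\dots,t_\ell}$ (Remark~\ref{rem:constructing-tree-from-chain-partition}), so $r$ is issued exactly $\ell$ secrets, contributing $\ell\,\card{U_r}$; and I must check that no further contributions have been double-counted or dropped. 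The cleanest bookkeeping is: for $x = r$, $\sum_{z \in \phi(r,\Pi)}\card{U_r} = \ell\,\card{U_r}$; for $x \ne r$, $\sum_{x \ne r}\card{U_x}\card{\phi(x,\Pi)} = \sum_{z}\big(\sum_{x \ne r,\ z\in\phi(x,\Pi)}\card{U_x}\big)$, and by Lemma~\ref{lem:z-in-phi-x-and-x-in-gamma-y-z} the inner sum is $\omega(yz)$ when $z \ne r$ (with $y$ the $\treepi$-parent of $z$) and is $0$ when $z = r$ (since no $x \ne r$ has $r \in \phi(x,\Pi)$, as $x \geqslant r$ forces $x = r$).

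I expect the main obstacle to be the edge-case bookkeeping around the maximum element $r$: making sure the $\ell\,\card{U_r}$ term captures exactly the $x=r$ row of the double sum and nothing else, that $z = r$ contributes nothing to the $\omega$-sum, and that Lemma~\ref{lem:z-in-phi-x-and-x-in-gamma-y-z}'s hypotheses ($x \ne r$ and $z \lessdot_{\treepi} y$) line up with the ranges of summation. Everything else is a routine Fubini-style interchange. A secondary point worth a sentence is that the map $z \mapsto (y,z)$ where $y$ is the unique $\treepi$-parent of $z$ is a bijection between $X \setminus \set{r}$ and the set of covering pairs $\set{(y,z) : z \lessdot_{\treepi} y}$, which is what licenses rewriting $\sum_{z \ne r}\omega(yz)$ as $\sum_{z \lessdot_{\treepi} y}\omega(yz)$.
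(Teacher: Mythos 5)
Your proposal is correct and follows essentially the same route as the paper's proof: split off the $x=r$ row (which gives $\ell\,\card{U_r}$ via $\phi(r,\Pi)=\set{t_1,\dots,t_\ell}$), interchange the order of summation, and apply Lemma~\ref{lem:z-in-phi-x-and-x-in-gamma-y-z} together with the uniqueness of the $\treepi$-parent to turn the remaining double sum into $\sum_{z \lessdot_{\treepi} y}\omega(yz)$. Your explicit treatment of the $z=r$ case (no $x\ne r$ can have $r\in\phi(x,\Pi)$) is a small point the paper leaves implicit, but it does not change the argument.
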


\begin{proof}
By definition,
\[
 \widehat{K}(\Pi) = \sum_{x \in X} \card{U_x} \card{\phi(x,\Pi)}
 =  \card{U_r} \card{\phi(r,\Pi)} + \sum_{x \in X\setminus r} \sum_{z \in X} \card{U_x} \delta(x,z), 
\]
where $\delta(x,z)$ equals $1$ if $z \in \phi(x, \Pi)$ and $0$ otherwise.
By Lemma~\ref{lem:z-in-phi-x-and-x-in-gamma-y-z}, we have $\delta(x,z) = 1$ if and only if $x \in \gamma(yz)$ for $z \lessdot_{\treepi} y$.
Moreover, $y$ is unique, since $\treepi$ is a tree.
Therefore, 
\[
 \sum_{x \in X\setminus r} \sum_{z \in X} \card{U_x} \delta(x,z) = \sum_{z \lessdot_{\treepi} y} \sum_{x \in \gamma(yz)} \card{U_x} = \sum_{z \lessdot_{\treepi} y} \omega(yz)
\]
As $r \geqslant x$ for all $x \in X$, $\phi(r,\Pi)$ must contain exactly one element from each chain in $\Pi$. Therefore $\card{U_r} \card{\phi(r,\Pi)} = \ell \card{U_r}$, as required.
\qed
\end{proof}

The following result shows that the number of secrets required by a chain partition can be computed by considering only the minimum elements in the chain partition.

\begin{lemma}\label{lem:number-of-keys-from-bottom-elements}
 Let $\Pi = \set{C_1,\dots,C_\ell}$ be a chain partition of $(X,\leqslant)$ and let chain $C_i$ have bottom element $b_i$, $1 \leqslant i \leqslant \ell$.
 Then 
 \[
  K(\Pi) = \sum_{i=1}^\ell \card{\uset{b_i}}\qquad\text{and}\qquad \widehat{K}(\Pi) = \sum_{i=1}^\ell \sum_{x \in \uset{b_i}}\card{U_x}.
 \]
\end{lemma}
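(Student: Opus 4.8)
The plan is to evaluate $\card{\phi(x,\Pi)}$ for each fixed $x$ by breaking it up over the chains of $\Pi$, and then to interchange the order of summation. First recall, from the discussion preceding the Problem Statement, that $\phi(x,\Pi)$ consists of the maximal elements of the non-empty suffixes $\dset{x}\cap C_1,\dots,\dset{x}\cap C_\ell$ (this uses the Proposition of Crampton \etal\ that $\dset{x}\cap C_i$ is a suffix of $C_i$). Consequently $\phi(x,\Pi)\cap C_i$ is a singleton whenever $\dset{x}\cap C_i\ne\emptyset$ and is empty otherwise, so
\[
 \card{\phi(x,\Pi)} = \card{\set{i : \dset{x}\cap C_i \ne \emptyset}}.
\]

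The one substantive step is the observation that $\dset{x}\cap C_i\ne\emptyset$ if and only if $x\in\uset{b_i}$. For the forward direction, if some $c\in C_i$ satisfies $c\leqslant x$, then $b_i\leqslant c\leqslant x$ because $b_i$ is the bottom element of $C_i$; hence $b_i\leqslant x$, i.e. $x\in\uset{b_i}$. Conversely, if $x\in\uset{b_i}$ then $b_i\in\dset{x}\cap C_i$. Combining this with the previous display, $\card{\phi(x,\Pi)}$ equals the number of indices $i$ with $x\in\uset{b_i}$.

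It then remains only to substitute into the definitions and swap the two summations. Writing $\card{\phi(x,\Pi)} = \sum_{i=1}^\ell [x\in\uset{b_i}]$ (with $[\cdot]$ the indicator), we get
\[
 K(\Pi) = \sum_{x\in X}\card{\phi(x,\Pi)} = \sum_{x\in X}\sum_{i=1}^\ell [x\in\uset{b_i}] = \sum_{i=1}^\ell \card{\uset{b_i}},
\]
and similarly $\widehat{K}(\Pi) = \sum_{x\in X}\card{U_x}\card{\phi(x,\Pi)} = \sum_{i=1}^\ell\sum_{x\in\uset{b_i}}\card{U_x}$.

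I do not expect any real obstacle here: the only non-bookkeeping ingredient is the equivalence $\dset{x}\cap C_i\ne\emptyset \Leftrightarrow b_i\leqslant x$, which is immediate from $b_i$ being the minimum of $C_i$, and everything else is a Fubini-style exchange of summation order. (One could alternatively derive the identities from Theorem~\ref{thm:keys-equal-sum-of-omega} by a more involved manipulation of the $\gamma$ sets, but the direct count above is shorter and self-contained.)
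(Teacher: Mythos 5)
Your proposal is correct and follows essentially the same route as the paper's proof: both express $\card{\phi(x,\Pi)}$ as the number of chains $C_i$ with $\dset{x}\cap C_i\ne\emptyset$, identify this with the number of bottom elements $b_i\leqslant x$, and conclude by an indicator-function exchange of the order of summation. No gaps.
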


\begin{proof}
 We have, by definition,
 \begin{align*}
  \widehat{K}(\Pi) &= \sum_{x \in X}  \card{U_x} \card{\phi(x,\Pi)} 
         = \sum_{x \in X}  \card{U_x} \card{\set{C_i : C_i \cap \dset{x} \ne \emptyset, 1 \leqslant i \leqslant \ell}} \\
         &= \sum_{x \in X}  \card{U_x} \card{\set{b_i : x \geqslant b_i,1 \leqslant i \leqslant \ell}} \\
         &= \sum_{x \in X} \sum_{i=1}^\ell  \card{U_x} \delta(x,b_i) \qquad\text{where $\delta(x,b_i) = 1$ if $x \geqslant b_i$ and $0$ otherwise} \\
         &= \sum_{i=1}^\ell \sum_{x \in X}  \card{U_x} \delta(x,b_i) 
         = \sum_{i=1}^\ell \sum_{x \in \uset{b_i}}\card{U_x}
 \end{align*}
Clearly, we may prove the result for $K$ in an analogous fashion.\qed
\end{proof}

In Fig.~\ref{subfig:chainpartition20keys}, for example, the bottom elements are $a$, $c$, $d$ and $f$ and $\card{\uset{a}} = 8$, $\card{\uset{c}} = 6$, $\card{\uset{d}} = 4$ and $\card{\uset{f}} = 2$.
Thus, the number of secrets required in total is $20$.

\begin{theorem}\label{thm:chain-partition-only-requires-w-chains}
 Let $(X,\leqslant)$ be an information flow policy of width $w$ and let $\widehat{K}$ denote the minimum number of secrets required by a chain-based enforcement scheme for $X$.
 Then there exists a chain partition containing $w$ chains 
 such that $\widehat{K}(\Pi) = \widehat{K}$.
\end{theorem}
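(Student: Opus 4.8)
The plan is to start from any chain partition $\Pi^{*}$ that attains the minimum value $\widehat{K}$ and, as long as it has more than $w$ chains, transform it into another optimal chain partition with one fewer chain; iterating terminates at a chain partition with exactly $w$ chains (it cannot have fewer) and value $\widehat{K}$.

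The first ingredient is that, by Lemma~\ref{lem:number-of-keys-from-bottom-elements}, $\widehat{K}(\Pi)$ depends only on the set $B(\Pi)=\set{b_{1},\dots,b_{\ell}}$ of bottom elements of the chains of $\Pi$: indeed $\widehat{K}(\Pi)=\sum_{b\in B(\Pi)}\sum_{x\in\uset{b}}\card{U_{x}}$, and every inner sum $\sum_{x\in\uset{b}}\card{U_{x}}$ is non-negative. Consequently, if $\Pi'$ is any chain partition with $B(\Pi')\subseteq B(\Pi)$, then $\widehat{K}(\Pi')\leqslant\widehat{K}(\Pi)$. So it suffices to show: whenever a chain partition has $\ell>w$ chains, there is a chain partition with $\ell-1$ chains whose set of bottom elements is contained in the original one.

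To produce that partition I would use the standard correspondence between chain partitions and matchings. Build the bipartite graph $G$ on two copies of $X$, with an edge joining $x$ in the first copy to $y$ in the second copy exactly when $y<x$ in $(X,\leqslant)$. A matching $M$ of $G$ selects, for some elements, a unique strictly smaller element (its successor down a chain), injectively; because every edge strictly decreases the label there are no cycles, so $M$ encodes a chain partition $\Pi_{M}$ with $\card{X}-\card{M}$ chains whose bottom elements are precisely the first-copy vertices left unmatched by $M$, and every chain partition arises this way. Since the minimum number of chains is $w$ by Dilworth's theorem, the maximum size of a matching in $G$ is $\card{X}-w$. Now if $\Pi^{*}$ has $\ell>w$ chains, its matching $M^{*}$ has size $\card{X}-\ell<\card{X}-w$, hence is not maximum, so by Berge's theorem it admits an augmenting path $P$ from an unmatched first-copy vertex $u$ to an unmatched second-copy vertex. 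Switching matched and unmatched edges along $P$ gives a matching $M'$ of size $\card{M^{*}}+1$; the only first-copy vertex whose matched status changes is $u$, which becomes matched. Hence $\Pi_{M'}$ has $\ell-1$ chains and $B(\Pi_{M'})=B(\Pi^{*})\setminus\set{u}$, so $\widehat{K}(\Pi_{M'})\leqslant\widehat{K}(\Pi^{*})=\widehat{K}$, and therefore equality holds. Repeating this step until $w$ chains remain finishes the proof.

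I expect the main obstacle to be conceptual rather than technical: one would like to reduce the number of chains by concatenating two whole chains, but two chains of a partition need not have a union that is a chain, so a more global rewiring of successor relations is needed. The matching/augmenting-path formulation handles this automatically, and its key virtue is that the rewiring removes exactly one element from the set of bottom elements and creates none — precisely what is needed to keep $\widehat{K}$ from increasing, via Lemma~\ref{lem:number-of-keys-from-bottom-elements}.
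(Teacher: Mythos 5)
Your proof is correct. Its overall skeleton coincides with the paper's: both arguments start from an optimal partition, use Lemma~\ref{lem:number-of-keys-from-bottom-elements} to observe that $\widehat{K}(\Pi)$ depends only on the set of bottom elements (so shrinking that set cannot increase $\widehat{K}$, and by minimality it stays equal to $\widehat{K}$), and both reduce the number of chains one at a time while keeping the bottom set inside the original one. The difference lies in how the reduction step is justified. The paper simply cites the Gallai--Milgram theorem, in the form ``a chain partition with $t>w$ chains can be replaced by one with $t-1$ chains whose bottom vertices form a subset of the original bottom vertices,'' and iterates it. You instead prove exactly this statement from scratch via the standard bipartite-matching encoding of chain partitions (matching size $\card{M}$ corresponds to $\card{X}-\card{M}$ chains, with unmatched first-copy vertices being precisely the bottom elements), Dilworth's theorem to pin down the maximum matching size as $\card{X}-w$, and Berge's augmenting-path theorem to increase a non-maximum matching; the crucial observation that augmentation changes the matched status of only the two exposed endpoints, hence deletes exactly one bottom element and creates none, is stated and used correctly. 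What your route buys is a self-contained, elementary (and effectively algorithmic, via augmenting paths) proof in place of a black-box citation; what the paper's route buys is brevity. Either way the conclusion $\widehat{K}(\Pi^*)=\widehat{K}$ with exactly $w$ chains follows as you state.
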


\begin{proof}
 Let $\Pi$ be a chain partition of $X$ into $t \geqslant w$ chains such that $\widehat{K}(\Pi) = \widehat{K}$ and let $B$ be the set of bottom vertices in the chains of $\Pi$.
 A result of Gallai and Milgram asserts that if a chain partition $\Pi$ of a poset $(X,\leqslant)$ contains $t$ chains, where $t > w$, then there exists a chain partition $\Pi'$ into $t-1$ chains such that the set of bottom vertices in $\Pi'$ is a subset of $B$~\cite{GaMi60}.\footnote{The result is phrased in the language of digraphs, but every poset may be represented by an equivalent transitive acyclic digraph.}
 Hence, by iterated applications of the Gallai-Milgram result, there exists a chain partition $\Pi^*$ of width $w$ such that the set of bottom vertices $B^*$ in $\Pi^*$ is a subset of $B$.
 Moreover, by Lemma~\ref{lem:number-of-keys-from-bottom-elements},
 \[
  \widehat{K}(\Pi^*) = \sum_{b \in B^*}\sum_{x \in \uset{b}}\card{U_x}  \leqslant  \sum_{b \in B} \sum_{x \in \uset{b}}\card{U_x}
 \]
 By the minimality of $\widehat{K}$, we deduce that $\widehat{K}(\Pi^*) = \widehat{K}$.\qed
\end{proof}

\begin{corollary}
Let $(X,\leqslant)$ be an information flow policy.
There exists a chain partition such that the total number of secrets $\widehat{K}$ is minimized and $k_{\max} \leqslant w$.
\end{corollary}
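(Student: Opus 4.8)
The plan is to read the corollary off Theorem~\ref{thm:chain-partition-only-requires-w-chains} together with the elementary bound, already noted in the Remark following Dilworth's theorem, that a user can require no more secrets than there are chains in the partition. First I would apply Theorem~\ref{thm:chain-partition-only-requires-w-chains} to obtain a single chain partition $\Pi^* = \set{C_1,\dots,C_w}$ consisting of exactly $w$ chains and satisfying $\widehat{K}(\Pi^*) = \widehat{K}$. The crucial point is that both guarantees hold for the \emph{same} $\Pi^*$, which is precisely what that theorem delivers; so nothing further is needed to reconcile ``$\widehat{K}$ is minimized'' with ``$k_{\max}\leqslant w$''.

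Second, I would verify that $k_{\max}(\Pi^*) \leqslant w$ for this partition. Fix $x \in X$. As recalled in Section~\ref{sec:problem-statement} (via the Proposition of Crampton~\etal), $\set{\dset{x}\cap C_1,\dots,\dset{x}\cap C_w}$ is a disjoint family of chain suffixes, and $\phi(x,\Pi^*)$ consists of the maximal element of each non-empty member of this family. Hence $\phi(x,\Pi^*)$ contributes at most one element per chain $C_i$, so $\card{\phi(x,\Pi^*)} \leqslant w$. Taking the maximum over $x \in X$ gives $k_{\max}(\Pi^*) = \max_{x\in X}\card{\phi(x,\Pi^*)} \leqslant w$, as required.

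There is essentially no obstacle here: the substantive work is carried out in Lemma~\ref{lem:number-of-keys-from-bottom-elements} and Theorem~\ref{thm:chain-partition-only-requires-w-chains} (the latter relying on the Gallai--Milgram reduction to keep $\widehat{K}$ unchanged while shrinking the number of chains to $w$). The only thing to be careful about when writing it up is to phrase the argument in terms of the fixed partition $\Pi^*$ throughout, so that the reader sees that the minimality of $\widehat{K}$ and the bound $k_{\max}\leqslant w$ are simultaneous properties of $\Pi^*$ rather than of two possibly different partitions.
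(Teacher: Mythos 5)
Your proposal is correct and follows essentially the same route as the paper: invoke Theorem~\ref{thm:chain-partition-only-requires-w-chains} to obtain a single $w$-chain partition achieving $\widehat{K}$, then bound $k_{\max}$ by noting $\card{\phi(x,\Pi)}$ is at most the number of chains. Your extra detail (one element of $\phi(x,\Pi^*)$ per non-empty suffix $\dset{x}\cap C_i$) just spells out the bound the paper states directly.
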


\begin{proof}
 The result follows immediately from Theorem~\ref{thm:chain-partition-only-requires-w-chains}, the definition of $k_{\max} = \max\set{\card{\phi(x,\Pi)} : x \in X}$, and the fact that $\card{\phi(x,\Pi)}$ is bounded above by the number of chains in $\Pi$ for all $x \in X$.\qed
\end{proof}

\section{Finding a Chain Partition Requiring $\widehat{K}$ Keys}\label{sec:chain-partition-requiring-widehat-k-keys}

Suppose $(X,\leqslant)$ is a poset of width $w$.
In general, a chain partition of $X$ has $\ell \geqslant w$ chains.
Theorem~\ref{thm:chain-partition-only-requires-w-chains} asserts that there exists a partition of $X$ into $w$ chains such that the corresponding enforcement scheme requires the minimum number of secrets.
We now show how such a chain partition may be constructed.
In particular, we show how to transform the problem of finding a chain partition $\Pi$ such that $\widehat{K}(\Pi)$ attains the minimum value into a problem of finding a minimum cost flow in a network.

Informally, a \emph{network} is a directed graph in which each edge is associated with a \emph{capacity}.
A \emph{network flow} associates each edge in a given network with a flow, which must not exceed the capacity of the edge.
Networks are widely used to model systems in which some quantity passes through channels (edges in the network) that meet at junctions (vertices); examples include traffic in a road system, fluids in pipes, or electrical current in circuits.
In our setting, we model an information flow policy as a network in which the capacities are determined by the weights $\omega$.
Our definitions for networks and network flows follow the presentation of Bang-Jensen and Gutin~\cite{BaGu02}.

\begin{definition}
 A \emph{network} is a tuple $\mathcal{N} = (D,l,u,c,b)$, where:
 \begin{itemize}
  \item $D = (V,A)$ is a directed graph with vertex set $V$ and arc set $A$;
  \item $l : V \times V \rightarrow \mathbb{N}$ such that $l(vv') = 0$ if $vv' \not\in A$ and $l(vv') \geqslant 0$ otherwise;
  \item $u : V \times V \rightarrow \mathbb{N}$ such that $u(vv') = 0$ if $vv' \not\in A$ and $u(vv') \geqslant l(vv') \geqslant 0$ otherwise;
  \item $c : V \times V \rightarrow \mathbb{R}$;
  \item $b : V \rightarrow \mathbb{R}$ such that $\sum_{v \in V} b(v) = 0$.
 \end{itemize}
\end{definition}
%
%

Intuitively, $l$ and $u$ represent lower and upper bounds, respectively, on how much flow can pass through each arc, and $c$ represents the cost associated with each unit of flow in each arc. The function $b$ represents how much flow should enter or leave the network at a given vertex. If $b(x)=0$, then the flow going into $x$ should be equal to the flow going out of $x$. If $b(x)>0$, then there should be $b(x)$ more flow coming out of $x$ than going into $x$. If $b(x)<0$, there should be $|b(x)|$ more flow going into $x$ than coming out of $x$.

\begin{definition}
Given a network $\mathcal{N} = (D, l,u, c, b)$, a function $f:V \rightarrow \mathbb{N}$ is a \emph{feasible flow} for $\mathcal{N}$ if the following conditions are satisfied:
 \begin{itemize}
  \item $u(vv') \geqslant f(vv') \geqslant l(vv')$ for every $vv' \in V \times V$;
  \item $\sum_{v' \in V}(f(vv') - f(v'v)) = b(v)$ for every $v \in V$.
 \end{itemize}
The \emph{cost} of $f$ is defined to be \[ \sum_{vv' \in A} c(vv') f(vv'). \]
\end{definition}


Our aim is to find a tree $\treepi$ such that $\Pi$ is a chain partition of $X$ with $w$ chains that minimizes $\widehat{K}$.
To do this, we will construct a network $\mathcal{N}$ such that the minimum cost flow of $\mathcal{N}$ corresponds to the desired tree.
We can then find the minimum cost flow of $\mathcal{N}$ in polynomial time.

In $\treepi$, we want every vertex except $r$ to have at most one parent and at most one child.
We cannot represent this requirement directly in a network.
However, we can use the \emph{vertex splitting procedure}~\cite{BaGu02} to simulate it.
Specifically, given poset $(X,\leqslant)$, define $\vin[X] = \set{\vin : x \in X\setminus \{r\}}$ and $\vout[X] = \set{\vout : x \in X}$; and define $v' \prec v$ if and only if either $v = \vin$ and $v'=\vout$ for some $x \in X\setminus r$, or $v = \vout[x]$ and $v' = \vin[y]$ for some $x,y \in X$ such that $y < x$.
We now add a minimum element $\bot$, where $\bot \prec\vout$ for all $x \in X$.

Then define $D = (\vin[X] \cup \vout[X] \cup \set{\bot},A)$, where $xy \in A$ if and only if $y \prec x$, and the network $(D,l,u,c,b)$, where
\begin{align*}
 l(vv') &= %
   \begin{cases} 
    1 & \text{if $v=\vin, v'=\vout, x \in X\setminus r$} \\
    0 & \text{otherwise;}
   \end{cases} \\
 u(vv') &= 
   \begin{cases}
    1 & \text{if $v' \prec v$} \\
    0 & \text{otherwise;}
   \end{cases} \\
 c(vv') &= %
  \begin{cases}
    \omega(xy) & \text{if  $v = \vout, v' = \vin[y], y \leqslant x$} \\
    0  & \text{otherwise;} \\
  \end{cases} \\
 b(v) &= 
 \begin{cases}
  w & \text{if $v = \vout[r]$} \\
  -w & \text{if $v = \bot$} \\
  0 & \text{otherwise.}
 \end{cases}
\end{align*}
We call this network the \emph{network chain-representation of $(X,\leqslant)$}.
Note that any feasible flow $f$ for this network must have $0 \leqslant f(xy) \leqslant 1$ for all $xy \in A$.

\begin{lemma}\label{lem:min-cost-flow-equals-min-keys}
 Let $\mathcal{N}$ be the network chain-representation of poset $(X,\leqslant)$.
 Then the minimum number of secrets required by a chain-based enforcement scheme for $(X,\leqslant)$ with $w$ chains is $w \card{U_r} + \widehat{f}$, where $\widehat{f}$ is the minimum cost of a feasible flow in $\mathcal{N}$.
\end{lemma}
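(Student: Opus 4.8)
The plan is to lift Theorem~\ref{thm:keys-equal-sum-of-omega} through a cost-preserving correspondence between chain partitions of $X$ into $w$ chains and feasible flows of $\mathcal{N}$. By Theorem~\ref{thm:keys-equal-sum-of-omega}, every chain partition $\Pi$ into $w$ chains satisfies $\widehat{K}(\Pi) = w\card{U_r} + \sum_{z \lessdot_{\treepi} y} \omega(yz)$, so it suffices to prove that the minimum of $\sum_{z \lessdot_{\treepi} y} \omega(yz)$ over all such partitions equals $\widehat{f}$. I will establish the two inequalities $w\card{U_r}+\widehat f \leqslant \widehat K$ and $w\card{U_r}+\widehat f \geqslant \widehat K$ separately.

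For $w\card{U_r} + \widehat{f} \leqslant \widehat{K}$: given a chain partition $\Pi$ into $w$ chains, define a flow $f_\Pi$ on $\mathcal{N}$ by setting $f_\Pi(\vin\,\vout) = 1$ for every $x \in X \setminus r$; $f_\Pi(\vout\,\vin[y]) = 1$ whenever $y \lessdot_{\treepi} x$ (here $y < x$, so the arc exists, and $y \ne r$); $f_\Pi(\vout[b]\,\bot) = 1$ for the bottom element $b$ of each chain; and $f_\Pi = 0$ on every remaining arc. I will check that $f_\Pi$ is feasible: every arc value is $0$ or $1$, so the capacities are respected, and in particular the arcs $\vin\,\vout$ with $l = u = 1$ carry $1$; at each $\vin$ the single incoming unit, from the out-copy of the unique parent of $x$ in $\treepi$, is forwarded to $\vout$; at each $\vout$ with $x \ne r$ the incoming unit is forwarded to $\vin[y]$ for the unique child $y$ of $x$ in $\treepi$, or to $\bot$ if $x$ is a chain bottom; $\vout[r]$ emits its $w$ units to the $w$ children of $r$ in $\treepi$, one of which is replaced by $\bot$ exactly when $r$ forms a singleton chain; and $\bot$ absorbs the $w$ units arriving from the $w$ chain bottoms. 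Since the only arcs of nonzero cost are those of the form $\vout \to \vin[y]$, with cost $\omega(xy)$, the cost of $f_\Pi$ equals $\sum_{y \lessdot_{\treepi} x} \omega(xy) = \widehat{K}(\Pi) - w\card{U_r}$; minimising over $\Pi$ gives $\widehat{f} \leqslant \widehat{K} - w\card{U_r}$, and in passing this shows a feasible flow exists, for instance one obtained from a Dilworth partition.

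For the reverse inequality, let $f$ be any feasible flow. It is $0$/$1$ on every arc, and the lower bounds $l(\vin\,\vout) = 1$ together with flow conservation force each $\vin$ with $x \ne r$ to receive exactly one unit, from $\vout[z]$ for a unique $z > x$. Decomposing $f$ into unit paths and cycles, no cycle can occur, since traversing an arc $\vout[a]\,\vin[b]$ strictly decreases the underlying element in $(X,\leqslant)$; hence $f$ is a union of $w$ paths from $\vout[r]$ to $\bot$, and these are internally vertex-disjoint because each $\vin$ carries only one unit. Contracting each pair $\vin,\vout$ to $x$ and deleting the endpoints $\vout[r]$ and $\bot$, the path interiors are $w$ pairwise disjoint chains of $(X,\leqslant)$ whose union is $X \setminus r$, at most one of them empty (the arc $\vout[r]\,\bot$ has capacity $1$). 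Adjoining $r$ to the top of one of these chains, or taking $\set{r}$ as its own chain when one path is empty, produces a chain partition $\Pi$ with exactly $w$ chains whose tree $\treepi$ coincides with the union of the $w$ paths, since for every $x \ne r$ the parent of $x$ in $\treepi$ is its predecessor on its path. Hence the used arcs $\vout[z]\,\vin[x]$ are precisely the pairs with $x \lessdot_{\treepi} z$, so the cost of $f$ equals $\sum_{x \lessdot_{\treepi} z} \omega(zx) = \widehat{K}(\Pi) - w\card{U_r} \geqslant \widehat{K} - w\card{U_r}$. Applying this to a minimum-cost flow gives $\widehat{f} \geqslant \widehat{K} - w\card{U_r}$, and combining the two inequalities proves the lemma.

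I expect the reverse direction to be the main obstacle: one must argue that an \emph{arbitrary} feasible flow, not only one built from a partition, necessarily decomposes into exactly this $w$-path structure, and then handle the bookkeeping around $r$ — which lies on every one of the $w$ paths and may or may not constitute a singleton chain — carefully enough that the recovered partition genuinely has $w$ chains and its cost is correctly read off through Theorem~\ref{thm:keys-equal-sum-of-omega}.
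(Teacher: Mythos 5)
Your proposal is correct and takes essentially the same approach as the paper: the identical network, the identical flow constructed from a $w$-chain partition via $\treepi$, and a converse that recovers a $w$-chain partition from an arbitrary feasible flow and evaluates its cost through Theorem~\ref{thm:keys-equal-sum-of-omega}. The only (cosmetic) difference is in the converse direction, where you extract the partition via a path decomposition of the flow (using acyclicity to rule out cycles), whereas the paper argues locally from the lower bounds and flow conservation that each element has exactly one parent and at most one child under $\lessdot_f$ and then takes the reflexive transitive closure.
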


\begin{proof}
 Suppose we are given a chain partition $\Pi$ with $w$ chains.
 Then we may construct the tree $\treepi$.
 Consider the following flow:
 \begin{align*}
  f(\vin\vout) &= 1\qquad \text{for all $x \in X\setminus r$}; \\
  f(\vout\vin[y]) &= 1\qquad \text{if $y \lessdot_{\treepi} x$}; \\
  f(\vout\bot) &= 1\qquad \text{if $x$ is a bottom element in a chain in $\Pi$}; \\
  f &= 0 \qquad \text{otherwise}.
 \end{align*}
 Then we can show that $f$ is a feasible flow.
Indeed, by construction all arcs $xy$ satisfy  $u(xy) \geqslant f(xy) \geqslant l(xy)$. In the graph formed by arcs $xy$ with $f(xy)=1$, it is clear that every vertex $x$ has in-degree and out-degree $1$, except for $\vout[r]$ and $\bot$. As there is one element $y$ such that $y \lessdot_{\treepi} r$ for each chain in $\Pi$, $\vout[r]$ has in-degree $0$ and out-degree $w$ in this graph, and similarly $\bot$ has in-degree $w$ and out-degree $0$.
As all arcs $xy$ have $f(xy)=1$ or $f(xy)=0$, we have that \[ \sum_{v \in V(D)}(f(xv) - f(vx)) = b(x) \] for all $x$, as required.
Moreover, the cost of $f$ equals $\sum_{x \lessdot_{\treepi} y} \omega(yx)$.
 
Conversely, suppose $f$ is a feasible flow for $\mathcal{N}$.
Then we define $y \lessdot_f x$ if and only if $f(\vout,\vin[y]) = 1$. 
For each $x \in X\setminus r$, the arc $\vin\vout$ is the only in-coming arc for $\vout$ and the only out-going arc for $\vin$ in $D$, and by definition of $\mathcal{N}$, $f(\vin\vout) = 1$. As $b(\vin)=b(\vout)=0$ and all in-coming arcs for $\vin$ are of the form $\vout[y]\vin$, it follows that there is exactly one element $y \in X$ such that $x \lessdot_f y$, and at most one element $z \in X$ such that $z \lessdot_f x$.
As $b(\vout[r])=w$ and $\vout[r]$ has no in-coming arcs in $D$, and all its out-going arcs are of the form $\vout[r]\vin$, there are exactly $w$ elements $y$ such that $y \lessdot_f r$. Let these elements be labelled $t_1, \dots, t_w$.

Now choose an arbitrary $i$, $1 \leqslant i \leqslant w$, and define $y \lessdot_{\Pi} x$ if and only if $x=r$ and $y=t_i$, or $x \neq r$ and $y \lessdot_f x$.
Then for every element $x \in X$, there is at most one element $y \in X$ such that $x \lessdot_{\Pi} y$, and at most one element $z \in X$ such that $z \lessdot_{\Pi} x$.

It is easy to see that $\leqslant_{\Pi}$, the reflexive, transitive closure of $\lessdot_{\Pi}$, defines a chain partition of $X$. (Observe that as $D$ is an acyclic digraph, the transitive reflexive closure of $\lessdot_{\Pi}$ is antisymmetric, and therefore a partial order. The fact that $(X, \leqslant_{\Pi})$ is a chain partition can be shown by induction on $|X|$, considering $X$ with a minimal element removed for the induction step.)
By construction, the only maximal elements for $\leqslant_{\Pi}$ are $r$ and the elements $t_j$ for $j \neq i$. Thus,  $(X, \leqslant_{\Pi})$ has $w$ chains.

Recall the definition of $\lessdot_{\treepi}$, that  $y \lessdot_{\treepi} x$ if and only if either $y \lessdot_{\Pi} x$, or $y = t_j$, $j \ne i$, and $x = r$. Note that  $\lessdot_{\treepi}$ is exactly the relation $\lessdot_f$. 
By Theorem \ref{thm:keys-equal-sum-of-omega}, the number of secrets required by $\Pi$ is 
 \[ 
   w \card{U_r} + \sum_{z \lessdot_{\treepi} y} \omega(yz). 
 \]
As $z \lessdot_{\treepi} y$ if and only if $f(\vout[y]\vin[z]) = 1$, $c(\vout[y]\vin[z]) = \omega(yz)$, and $c(uv)=0$ for all other arcs with $f(uv)=1$, we have that $\sum_{z \lessdot_{\treepi} y} \omega(yz)$ is exactly the cost of $f$, as required.\qed
\end{proof}

\begin{lemma}\label{lem:min-cost-flow-in-poly-time}
 We can find a minimum cost flow for $\mathcal{N}$ in $O(|X|^4w)$ time.
\end{lemma}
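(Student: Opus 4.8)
The plan is to recast the task as a textbook minimum-cost-flow computation---after first removing the lower bounds---and then to invoke a standard polynomial-time flow algorithm, keeping track of the size of the instance throughout. To begin, I would record the parameters of $\mathcal{N}$: after the vertex-splitting it has $|V| = 2|X|$ vertices (namely $\vin[X]$, $\vout[X]$, and $\bot$) and $|A| = O(|X|^2)$ arcs (one arc $\vin[x]\vout[x]$ for each $x \ne r$, one arc $\vout[x]\bot$ for each $x \in X$, and at most one arc $\vout[x]\vin[y]$ for each pair $y < x$); every arc has upper bound $1$; and every arc cost is either $0$ or some $\omega(xy) = \sum_{v \in \gamma(xy)} \card{U_v} \geqslant 0$, so $\mathcal{N}$ has no negative costs. (If one also wishes to account for constructing $\mathcal{N}$ in the first place, the $O(|X|^2)$ weights $\omega$ can be precomputed within the same time budget after computing $\uset{v}$ for every $v$, since $\gamma(xy) = \uset{y} \setminus \uset{x}$; but the lemma as stated takes $\mathcal{N}$ as given.)

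Next I would eliminate the lower bounds. The only arcs of $\mathcal{N}$ carrying a positive lower bound are the $|X| - 1$ arcs $\vin[x]\vout[x]$ with $x \ne r$, each having $l = u = 1$ and cost $0$. Applying the standard lower-bound-removal transformation of~\cite{BaGu02}---delete each such arc, decrease $b(\vin[x])$ by $1$, and increase $b(\vout[x])$ by $1$---yields an equivalent network $\mathcal{N}'$ in which all lower bounds are $0$, all arc costs are unchanged, and (because the deleted arcs had zero cost) the additive cost offset is $0$. Feasible flows of $\mathcal{N}$ therefore correspond to feasible flows of $\mathcal{N}'$ of equal cost, so it suffices to find a minimum-cost flow of $\mathcal{N}'$. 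In $\mathcal{N}'$ the supplies and demands become $b(\vout[r]) = w$, $b(\bot) = -w$, $b(\vout[x]) = 1$ and $b(\vin[x]) = -1$ for $x \ne r$; these sum to $0$, with total supply $w + (|X| - 1)$.

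Then I would compute a minimum-cost flow of $\mathcal{N}'$ by the successive-shortest-augmenting-path (``buildup'') algorithm~\cite{BaGu02}: add a super-source $s$ joined to every vertex of positive $b$-value by an arc of that capacity and a super-sink $t$ joined from every vertex of negative $b$-value by an arc of the corresponding capacity, then repeatedly push a unit of flow along a cheapest $s$--$t$ path in the residual network. Such a path can be found in $O(|V|\,|A|) = O(|X|^3)$ time by the Bellman--Ford--Moore algorithm (the residual network of a minimum-cost flow of a given value contains no negative cycle, so this is well defined); using node potentials and Dijkstra's algorithm instead would be faster, but $O(|X|^3)$ per iteration already suffices. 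A feasible flow exists---by Dilworth's theorem~\cite{dilw:deco50}, $X$ has a chain partition into $w$ chains, and the flow exhibited in the proof of Lemma~\ref{lem:min-cost-flow-equals-min-keys} is then feasible for $\mathcal{N}$, hence (under the transformation) for $\mathcal{N}'$---so the algorithm terminates having routed the full amount $w + (|X| - 1)$. Since each augmentation carries at least one unit, there are $O(w + |X|) = O(|X|)$ augmentations (using $w \leqslant |X|$), giving a total running time of $O(|X|^4)$, which in particular establishes the claimed $O(|X|^4 w)$ bound. (Combined with Lemma~\ref{lem:min-cost-flow-equals-min-keys}, the minimum cost $\widehat{f}$ so obtained yields the minimum number of secrets $w\card{U_r} + \widehat{f}$.)

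The step I expect to require the most care is the bookkeeping around the lower bounds: one must check that after deleting the forced unit arcs $\vin[x]\vout[x]$ the network $\mathcal{N}'$ still admits a feasible flow (it does, via the correspondence with chain partitions together with Dilworth's theorem) and that the objective value is preserved exactly (it is, since those arcs carry no cost), so that a minimum-cost flow of $\mathcal{N}'$ genuinely realises, via Lemma~\ref{lem:min-cost-flow-equals-min-keys}, the minimum number of secrets. Everything else is a routine application of standard network-flow machinery to an instance whose size is polynomial in $|X|$.
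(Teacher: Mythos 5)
Your proposal is correct, but it reaches the $O(|X|^4w)$ bound by a genuinely different route from the paper. The paper's proof is a one-step appeal to the Negative Cycle (cycle-cancelling) algorithm, whose running time $O(nm^2CU)$ is pseudopolynomial: it plugs in $n=O(|X|)$, $m=O(|X|^2)$, $U=1$ and bounds on the maximum arc cost $C$ and the balances, and relegates the elimination of the lower bounds $l(\vin\vout)=1$ to a separate remark immediately following the lemma. You instead fold that lower-bound removal into the proof itself (exactly the transformation of the paper's remark, with the observation that the forced arcs have zero cost so the objective is unchanged), and then run the successive-shortest-augmenting-path algorithm: since all costs $\omega(xy)\geqslant 0$, the zero flow is optimal of its value, residual networks stay free of negative cycles, each Bellman--Ford--Moore path computation costs $O(|X|^3)$, and the total supply $w+|X|-1=O(|X|)$ bounds the number of unit augmentations, giving $O(|X|^4)\subseteq O(|X|^4w)$. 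What your approach buys is a bound that is independent of the cost magnitudes: the paper's bound formally depends on $C=\max\omega(xy)$, and since $\omega$ aggregates the user counts $\card{U_x}$ this requires an implicit assumption that these are suitably bounded, whereas your augmenting-path count depends only on $|X|$ and $w$. You also take care of feasibility explicitly (via Dilworth's theorem and the flow constructed in Lemma~\ref{lem:min-cost-flow-equals-min-keys}), which the paper leaves implicit; the paper's approach, in exchange, is shorter and defers all algorithmic detail to a standard reference.
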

\begin{proof}
The Negative Cycle algorithm (see~\cite[\S5.3]{Ahuja93}, for example) finds a minimum cost flow for a network with $n$ vertices and $m$ arcs in time $O(nm^2CU)$, where $C$ denotes the maximum cost on an arc, and $U$ denotes the maximum of all upper bounds on arcs and the absolute values of all balance demands on vertices. 
By construction of $\mathcal{N}$, we have that $n = 2|X| = O(|X|)$, $m = O(n^2)=O(|X|^2)$, $C = \max\set{\omega(xy): xy \in \eofg} = O(|X|)$, $U = 1$ and $C = w$.
Thus we get the desired running time.\qed
\end{proof}

\begin{remark}
Strictly speaking, the Negative Cycle algorithm assumes that all lower bounds on arcs are $0$. 
However, we can satisfy this assumption, given $\mathcal{N} = (D,l,u,c,b)$, by defining the network $\mathcal{N}' = (D,l',u',c,b')$, where
\begin{alignat*}{2}
 l'(xy) &= 0 & \qquad & b'(x) = b(x) - l(xy) \\
 u'(xy) &= u(xy) - l(xy) & \qquad & b'(y) = b(y) + l(xy) 
\end{alignat*}
Then the minimum cost flow $f'$ for $\mathcal{N}'$ will have cost exactly $\sum_{xy} l(xy)c(xy)$ less than the minimum cost flow for $\mathcal{N}$, and $f'$ can be transformed into a minimum cost feasible flow $f$ for $\mathcal{N}$ by setting $f(xy) = f'(xy)+l(xy)$.
\end{remark}

We are now able to prove our main result, which is, essentially, a corollary of Theorem~\ref{thm:chain-partition-only-requires-w-chains} and Lemmas~\ref{lem:min-cost-flow-equals-min-keys} and~\ref{lem:min-cost-flow-in-poly-time}.

\begin{proof}[of Theorem~\ref{thm:main-theorem}]
By Theorem \ref{thm:chain-partition-only-requires-w-chains}, there exists a chain partition  that has exactly $w$ chains, for which the corresponding chain-based enforcement scheme only requires $\widehat{K}$ secrets.
Then by Lemma \ref{lem:min-cost-flow-equals-min-keys},  $\widehat{K}$  is equal to the minimum cost of a feasible flow in $\mathcal{N}$, the network chain-representation of $(X, \leqslant)$. 
By Lemma \ref{lem:min-cost-flow-in-poly-time}, such a flow can be found in $O(|X|^4w)$ time, and this flow can be easily transformed into the corresponding chain partition $\Pi$. 
Finally, by definition of $\phi(x, \Pi)$, $|\phi(x, \Pi)| \le w$ for each $x \in X$ and therefore $k_{\max} \leqslant w$.\qed
\end{proof}

\section{Concluding Remarks}

Cryptographic enforcement schemes (CESs) fall into two broad categories: those that use symmetric cryptographic primitives and those that use asymmetric ones (notably attribute-based encryption~\cite{BeSaWa06,OsSaWa07}).
The focus of this paper is on symmetric schemes, which may be characterized by%
\begin{inparaenum}[(i)]
 \item the total number of secrets required,
 \item the number of secrets required per user,
 \item the total amount of public information required for the derivation of secrets, and
 \item the number of derivation steps required.
\end{inparaenum}

Until recently, symmetric CESs for information flow policies have assumed each user would be given a single secret, from which other secrets and decryption keys would be derived using public information generated by the scheme administrator (see, for example,~\cite{atal:dyna09,cram:key06}).
In this setting, there is a considerable literature on the trade-offs that are possible by reducing the number of steps required for the derivation of secrets, at the cost of increasing the amount of public information (see, for example,~\cite{AtBlFr07,Cr11,ArSaFeMa09}).

One drawback of these types of CESs is that the administrator must generate and publish information to facilitate the derivation of secrets (and decryption keys).
Moreover, the amount of public information required may be substantial, particularly when security labels are defined in terms of (subsets of) attributes.
Chain-based CESs obviate the requirement for public information, the trade-off being that each user may require several secrets. 
The chain-based approach may well be much more practical, particularly if the poset is large and its Hasse diagram contains many edges (as in a powerset, for example).
Moreover, chain-based CESs may be implemented using one-way functions, typically the fastest of cryptographic primitives in practice.

However, it was not known which choice of chain partition was most appropriate for a given information flow policy.
Our work provides formal and practical methods for constructing a chain partition with the smallest number of keys in total, with the additional property that no user is required to have more than $w$ keys, where $w$ is the width of the information flow policy.

One question remains: If there exist multiple chain partitions that minimize the number of keys in total and per-user, which of these should we choose and can we compute it efficiently?
The one parameter that our work does not address is the number of derivation steps $d$ required by a user in the worst case.
Our future work, then, will attempt to find a polynomial-time or fixed-parameter algorithm that takes a poset as input and outputs a chain partition into $w$ chains that minimizes $d$.
We also hope to investigate whether the insight provided by Lemma~\ref{lem:number-of-keys-from-bottom-elements}---that $\widehat{K}(\Pi)$ is completely determined by the bottom elements in $\Pi$---can be exploited to design an algorithm whose performance improves on that of the algorithm described in Section~\ref{sec:chain-partition-requiring-widehat-k-keys}.

\paragraph{Acknowledgements.}
The authors would like to thank Betram Poettering for his valuable feedback and the reviewers for their comments.
\nocite{CiViFoJaPaSa10}

\bibliography{refs}
\bibliographystyle{splncs03}

\end{document}